\newtheorem {theo} {\bf Theorem}
\newtheorem {lem} [theo] {\bf Lemma}
\newtheorem {defn} [theo] {\bf Definition}
\newtheorem {rem} [theo] {\bf Remark}
\newtheorem {assum} {\bf Assumption}
\newcommand{\be}{\begin{eqnarray}}
\newcommand{\ee}{\end{eqnarray}}
\newcommand{\benn}{\begin{eqnarray*}}
\newcommand{\eenn}{\end{eqnarray*}}
\newcommand{\bse}{\begin{equation}}
\newcommand{\ese}{\end{equation}}
\newcommand{\bsenn}{\begin{displaymath}}
\newcommand{\esenn}{\end{displaymath}}
\newcommand{\logand}{\;\;{\rm and }\;\;}
\newcommand{\numVeh}{N}
\newcommand{\subvel}{v}		%subscript for velocity
\newcommand{\subpos}{x}		%subscript for position
\newcommand{\fric}{a} 		%friction
\newcommand{\pos}{z}  		%position in error dynamics
\newcommand{\posdist}{x}	%position with distance to the leader
\newcommand{\vel}{v}		%velocity
\newcommand{\contState}{c}	%controller state
\newcommand{\inp}{u}		%input to the vehicle
\newcommand{\regerr}{\epsilon}		%regulation error
\newcommand{\gv}{g_v}		%gain of velocity error
\newcommand{\gx}{g_x}		%gain of position error
\newcommand{\asym}{\rho}
\newcommand{\rx}{\asym_\subpos}	%asymmetry in position
\newcommand{\rv}{\asym_\subvel}	%asymmetry in velocity
\newcommand{\bx}{\beta_\subpos}			%beta_x = 2 \asymPos+1
\newcommand{\bv}{\beta_\subvel}			%beta_v = 2 \asymVel+1
\newcommand{\wv}{c}		%wave velocity general
\newcommand{\cp}{c_+}	%wave velocity towards the end
\newcommand{\cm}{c_-}	%wave velocity towards the leader
\newcommand{\ampl}{A}	%amplitude of the oscillations
\newcommand{\period}{T}	%period of the oscillation
\newcommand{\erl}{e}	%spacing error to the leader
\newcommand{\areaosc}{E}	%area of the error of the agent to the leader
\newcommand{\lapl}{L}		%Laplacian in general
\newcommand{\lv}{\lapl_\subvel} 	%Laplacian for velocity path graph
\newcommand{\lx}{\lapl_\subpos}		%Laplacian for position path graph
\newcommand{\laplc}{\hat{\lapl}}
\newcommand{\lvc}{\laplc_\subvel} 	%Laplacian for velocity circular graph
\newcommand{\lxc}{\laplc_\subpos}		%Laplacian for position circular graph
\newcommand{\systMat}{M_\numVeh}	%overall system matrix in state space
\newcommand{\systMatCirc}{\hat{M}_\numVeh} %overall system matrix for circular
\newcommand{\eigVectM}{u}			%eigenvector of state matrix \systMat
\newcommand{\eigVectL}{w}			%eigenvector of Laplacian
\newcommand{\sfr}{\phi}		%frequency of the eigenvector of the Laplacian
\newcommand{\eigl}{\lambda}			%eigenvalue of the Laplacian
\newcommand{\eiglx}{\eigl_\subpos}	%eigenvalue of Laplacian for position
\newcommand{\eiglv}{\eigl_\subvel}	%eigenvalue of Laplacian for velocity
\newcommand{\eigm}{\nu}				%eigenvalue of the overall system matrix \systMat 
\newcommand{\eigmcoef}{n}		%coefficient in taylor series expansion of \eigm
\begin{document}
\begin{frontmatter}

\title{Transients of platoons with asymmetric and different
Laplacians\tnoteref{support}} 
\tnotetext[support]{J.J.P. Veerman's research was partially supported by the European
Union's Seventh Framework Program (FP7-REGPOT-2012-2013-1) under grant agreement
n316165. I. Herman was supported by the Czech Science Foundation
within the project GACR 13-06894S. For the simulations we used the code
written by Carlos Cantos.}

\author[czechaddress]{Ivo Herman\corref{mycorrespondingauthor}}
\ead{ivo.herman@fel.cvut.cz}
\cortext[mycorrespondingauthor]{Corresponding author}
\author[czechaddress]{Dan Martinec}
\ead{dan.martinec@fel.cvut.cz}
\author[peteraddress1,peteraddress3]{J. J. P. Veerman}
\ead{veerman@pdx.edu}
\address[czechaddress]{Faculty of Electrical Engineering, Czech Technical
University in Prague. Czech Republic}
\address[peteraddress1]{Fariborz Maseeh Dept. of Math. and Stat., Portland State
Univ., Portland, OR, USA} 
\address[peteraddress3]{CCQCN, Dept of Physics, University of Crete, 71003
Heraklion, Greece}
% 
% \author{Ivo Herman, Dan
% Martinec\thanks{Ivo Herman and Dan Martinec are with the Faculty of Electrical Engineering, Czech Technical University in Prague. e-mail:
% \{ivo.herman, dan.martinec\}@fel.cvut.cz}, J. J. P. Veerman\thanks{Fariborz
% Maseeh Dept. of Math. and Stat., Portland State Univ., Portland, OR, USA, and CCQCN, Dept of Physics, University
% of Crete, 71003 Heraklion, Greece; e-mail: veerman@pdx.edu}
% \thanks{J.J.P. Veerman's research was partially supported by the European
% Union's Seventh Framework Program (FP7-REGPOT-2012-2013-1) under grant agreement
% n316165. I. Herman was supported by the Grant Agency of the Czech Republic
% within the project GACR 13-06894S} }

 %\normalsize        %\mysetfontsize5

\begin{abstract}
We consider an asymmetric control of platoons of identical vehicles with
nearest-neighbor interaction.
Recent results show that if the vehicle uses different asymmetries for position and
velocity errors, the platoon has a short transient and low
overshoots. In this paper we investigate the properties of vehicles with
friction. To achieve consensus, an integral part is added to the controller,
making the vehicle a third-order system. We show that the parameters can
be chosen so that the platoon behaves as a wave equation with different wave
velocities.
Simulations suggest that our system has a better performance than other nearest-neighbor scenarios.
Moreover, an optimization-based procedure is used to find the controller properties.
\end{abstract}
\begin{keyword}
vehicular platoons\sep different asymmetries\sep distributed control \sep
scaling \sep multiple Laplacian
\end{keyword}
\end{frontmatter}
%\vskip 0.4in\noindent
\begin{centering}
\vspace{-3pt}
\section{Introduction}
\label{chap:zero} 
\end{centering}
Control of vehicular platoons has become a field of intensive research. The
reason for this is the possibility to increase throughput and safety of the
highway traffic at the same time. Moreover, since the study of such large
systems gives asymptotic behavior and achievable limits, it is also appealing
from a theoretical perspective. 

It is well known that if some centralized information is present in the
system, then the performance is good and the system is scalable. These
approaches comprise either LQR control \cite{Bamieh2002} or
local control with added knowledge of the desired velocity \cite{Lin2012,
Barooah2009a}. Among the systems with permanent communication the Cooperative
Automatic Cruise Control is the most widely studied \cite{Milanes2014} and
implemented.

In this paper we consider a nearest-neighbor interaction without any
centralized information. The most important performance
measure is, apart from the settling time, a so called string stability. The
system is string stable if the disturbance acting at one vehicle does not
amplify as it propagates along the platoon \cite{Ploeg2014}. It is shown in
\cite{Middleton2010} that a time-headway spacing policy can achieve string stability. However, this policy increases the platoon length with the speed of
the platoon.

A common fixed-distance scenario is a symmetric bidirectional control, where the
vehicle measures the distance to both its neighbors in the platoon. The properties of such
control are investigated in \cite{Lin2012, Hao2012, 
Veerman2007}. One of the advantages is linear scaling of the
$\mathcal{H}_\infty$ norm with the number of vehicles \cite{Veerman2007}. The
main drawback is a very long transient for a higher number of vehicles and sensitivity to noise \cite{Bamieh2012}. Both can be qualitatively decreased using a feedback control of the leader \cite{Martinec2014b}. 

Asymmetric control was proposed to shorten the transient and increase the 
controllability of large platoons \cite{Barooah2009a,
Hao2012b} --- a vehicle uses different weights to its two neighbors.
Nevertheless, the price to pay is an exponential growth of $\mathcal{H}_\infty$ norm of the transfer functions with the graph distance between vehicles \cite{Tangerman2012,
Herman2013b}, so this approach is not scalable.

Both symmetric and asymmetric control share one common property
--- the asymmetry in coupling between between vehicles is the same for position and
velocity.
That is, only one graph Laplacian is used. Except for a simpler analysis, there
are not many reasons to limit ourselves to use only one type of coupling. As was
numerically shown in \cite{Hao2012c} and more thoroughly discussed in \cite{Cantos2014b}, the
response of a platoon with a symmetric coupling in position and asymmetric in
velocity can be scalable and have short transients. Thus, this approach
combines the advantages both from symmetric and asymmetric control.

The paper \cite{Cantos2014b} considers a double integrator model of the
system and uses wave properties for system analysis. In this paper we extend the
results of \cite{Cantos2014b} to a third-order system.
The main reason is the presence of friction in every real vehicle. We added an
integrator to the system to achieve coherent solutions. With such a system, the
stability and performance analysis of the platoon complicates.
Although a similar type of a system was
recently analyzed in \cite{Andreasson2014}, the results are not applicable for
our case since the coupling used in our work is non-symmetric.
% Also a more general paper \cite{Arcak2007} relies on
% the symmetry in coupling.

Finally, to achieve a good performance, we provide an optimization procedure
based on the properties of waves in the platoon. This approach allows us to
optimize the controller parameters and weights of the communication graph at the
same time.

\section{Model of the vehicles} 
 \label{chap:second}
 
We assume $\numVeh+1$ identical vehicles travelling on a line, indexed as $0,
\ldots, \numVeh$. The first vehicle with index 0 is a leader which is driven
independently of the the rest of the formation. Unlike standard double integrator models
\cite{Bamieh2012,  Tangerman2012,Cantos2014b}, real systems have a friction, i.
e., there is a feedback from velocity, which eventually makes the vehicle to
stop.
The vehicle model is
\vspace{-3pt}
\begin{equation}
	\ddot {\posdist}_i = -\fric \vel_i + \inp_i, \label{eq:vehmod}
\end{equation}
where $\posdist_i$ is the position of the $i$th vehicle, $\vel_i=\dot
\posdist_i$ is its velocity, $\fric \in \mathbb{R}$ is the viscous friction
coefficient and $\inp_i$ is the input to the vehicle.

In order to enable the vehicles in the platoon to track the
leader moving with constant velocity, we need to satisfy the Internal Model
Principle \cite{Wieland2011, Lunze2012} which in our case means the presence of two
integrators in the open-loop model of each vehicle. Since one integrator is
already present in the vehicle model ($\dot \posdist_i=\vel_i$), it suffices to
add an integral action in the controller of each
vehicle. The controller is given as $\dot{\bar{\contState}}_i = \regerr_i$
with $\regerr_i$ defined in (\ref{eq:regerr}) and $\bar\contState_i$ is the
state of the integrator in the controller.
The input to the vehicle is then $\inp_i = \bar{\contState}_i$. 

Each vehicle uses only the information obtained from its nearest neighbors
--- the vehicle in front of it and behind. The goal of the vehicle is to keep a
prescribed spacing to them, i.e., $\posdist_{i-1}-\posdist_i \rightarrow d_{i-1,
i}$ with $d_{i,j}$ being the desired distance between $i$ and $j$. The
regulation error $\regerr_i$ comes from  the relative spacing and velocity errors as
\begin{IEEEeqnarray}{rCl}
	\regerr_i &=& \gx
	\big[(1-\rx)(\posdist_{i-1}-\posdist_i-d_{i-1,
	i})-\rx(\posdist_i-\posdist_{i+1} \label{eq:regerr}
	\\
	&&-d_{i, i+1})\big]
	+ \gv \big[(1-\rv)(\vel_{i-1} - \vel_i)-\rv(\vel_i
	-\vel_{i+1})\big],\nonumber
\end{IEEEeqnarray}
where the position asymmetry is labeled as $\rx$, velocity asymmetry as $\rv$
and the $\gx, \gv \in \mathbb{R}$ are weights of position and velocity errors.
The coupling in
position is symmetric if $\rx=0.5$ and asymmetric otherwise (the same for
$\rv$).

 To simplify the analysis, we introduce error variables
 $\pos_i=\posdist_i - \posdist_0 + d_{0, i}$. This implies
$\pos_{i-1}-\pos_i=\posdist_{i-1}-\posdist_i-d_{i-1, i}$ and $\dot \pos_{i-1}-\dot \pos _i=\dot \posdist_{i-1}-\dot \posdist_i$.
We impose that $d_{i,k}-d_{i, j}=d_{j,k}$ and $d_{i,j}=-d_{j,i}$.
% In further development we will work with error dynamics of the form $\pos_i =
% \posdist_i-d_{0,i}$, which implies $\posdist_{i-1}-\posdist_i - d_{i-1, i} =
% \pos_{i-1}-\pos_{i}$. 
The single vehicle model combined with the controller then has a form 
\begin{IEEEeqnarray}{rCl}
	\ddot \pos_i = -\fric \dot \pos_i + \bar{\contState},
	\quad
	\dot{\bar{\contState_i}} = \regerr_i.
\end{IEEEeqnarray}
We use a minor state transformation $\contState_i=\bar{\contState}_i-\fric
\vel_i$ to obtain a controller-canonical form of the individual-vehicle model
\begin{IEEEeqnarray}{rCl}
	\begin{pmatrix}
		\dot \pos_i \\ \ddot \pos_i \\ \dot \contState_i
	\end{pmatrix}
	= \begin{pmatrix}
		\dot \pos_i \\ \ddot \pos_i \\ \dddot \pos_i
	\end{pmatrix}
	= 
	\begin{pmatrix}
		0 & 1 & 0 \\
		0 & 0 & 1 \\
		0 & 0 & -\fric
	\end{pmatrix}
	\begin{pmatrix}
		\pos_i \\ \dot \pos_i \\ \ddot \pos_i
	\end{pmatrix}
	+
	\begin{pmatrix}
		0 \\ 0 \\ 1
	\end{pmatrix}
	\regerr_i.
\end{IEEEeqnarray}

 In a vector form we write the overall system of $N+1$ vehicles (including the
 leader) as
\begin{equation}
\dfrac{d}{dt}\begin{pmatrix} 
				\pos\\ 
				\dot \pos \\
 				\ddot \pos 
 			\end{pmatrix}  =
\systMat 	\begin{pmatrix} 
				\pos \\ 
				\dot \pos \\ 
				\ddot \pos 
			\end{pmatrix} \equiv
			\begin{pmatrix} 
				0 & I & 0 \\
				0 & 0 & I \\
				-\gx \lx & -\gv \lv & -\fric I \end{pmatrix}
\begin{pmatrix} \pos\\ \dot \pos \\ \ddot \pos \end{pmatrix},
\label{eq:overallSystemPath}
\end{equation}
where $\pos = [\pos_0, \ldots, \pos_\numVeh]^T$. 
Let us call the system
(\ref{eq:overallSystemPath}) \emph{a path system}, since the communication
topology is a weighted path graph. 
 The Laplacians $\lx, \lv \in \mathbb{R}^{\numVeh+1\times
\numVeh+1}$ of the path graph are defined as
{\setlength{\arraycolsep}{2pt}
\begin{IEEEeqnarray}{rCl} 
	\lx &=& 	\begin{pmatrix}
				0 & 0 & 0 & 0 & \ldots & 0 \\
				-(1-\rx) & 1 & -\rx & 0 & \ldots & 0 \\
				0 & -(1-\rx) & 1 & -\rx & \ldots & 0 \\
				\vdots & \vdots & \vdots & \vdots & \ddots & \vdots \\
				0 & 0 & 0 & \ldots & -1 & 1
	\end{pmatrix}, \label{eq:lx} \\
	\lv &=& 	\begin{pmatrix}
				0 & 0 & 0 & 0 & \ldots & 0 \\
				-(1-\rv) & 1 & -\rv & 0 & \ldots & 0 \\
				0 & -(1-\rv) & 1 & -\rv & \ldots & 0 \\
				\vdots & \vdots & \vdots & \vdots & \ddots & \vdots \\
				0 & 0 & 0 & \ldots & -1 & 1
			\end{pmatrix}. \label{eq:lv}
\end{IEEEeqnarray}
}

The last vehicle has no follower, so it uses only front spacing and velocity
errors. This type of boundary condition is called regular boundary condition
\cite{Cantos2014b}. The second boundary condition is that the leader is driven
independently of the platoon (zeros in the first rows of $\lx, \lv$).

We assume that initially the system in (\ref{eq:overallSystemPath}) is at
stand-still and then the leader starts to move with unit velocity:
\begin{IEEEeqnarray}{rCl}
	\pos_i(t)&=&0 \: \text{ for } i=1, \ldots, \numVeh, \: t < 0 ,
	\label{eq:initCond}\\
	\posdist_0(t)&=&0, \: t<0, \:\: \posdist_0(t) = t, \: t\geq0. \nonumber
\end{IEEEeqnarray}

\section{Relation to previous works}
Our paper builds on the results of works \cite{Cantos2014b, Cantos2014a}. Both papers deal with a signal
propagation in systems with nearest-neighbor interaction. The vehicle model is a
double integrator, i.e., $\ddot \posdist_i= \regerr_i$ with $\regerr_i$ given in
(\ref{eq:regerr}), so the papers use nearest-neighbor asymmetric
interaction. 

The work \cite{Cantos2014a} analyzes a system with a circular topology. We
call such system a \emph{circular system}. The interaction between the leader
and the vehicle $\numVeh$ is added, so the Laplacians $\lxc, \lvc \in
\mathbb{R}^{N+1\times N+1}$ of a circular graph are
{\setlength{\arraycolsep}{2pt}
\begin{IEEEeqnarray}{rCl} 
	\lxc &=& 	\begin{pmatrix}
				1 & -\rx & 0 & \ldots & -(1-\rx) \\
				-(1-\rx) & 1 & -\rx & \ldots & 0 \\
				\vdots & \vdots & \vdots &  \ddots & \vdots \\
				-\rx & 0 & \ldots & -(1-\rx) & 1
			\end{pmatrix}.
			\label{eq:lxc}
%\end{IEEEeqnarray}
%\begin{IEEEeqnarray}{rCl}
% \\
% 	\lvc &=& 	\begin{pmatrix}
% 				1 & -\rv & 0 & \ldots & -(1-\rv) \\
% 				-(1-\rv) & 1 & -\rv & \ldots & 0 \\
% 				\vdots & \vdots  & \vdots & \ddots & \vdots \\
% 				-\rv & 0 & \ldots & -(1-\rv) & 1
% 			\end{pmatrix}.
% 			\label{eq:lvc}
\end{IEEEeqnarray}
}
$\lvc$ is defined similarly by replacing $\rx$ by $\rv$ in
(\ref{eq:lxc}). 

For such a system, first the conditions for asymptotic stability are
found.
The most important condition is that $\rx=0.5$ \cite[Prop. 3.5]{Cantos2014a}---there must be a symmetric
coupling in the position. For stable circular systems it is shown in \cite[Thm.
4.8]{Cantos2014a} that an external input or a disturbance causes two signals to
propagate in the system in opposite directions and with different velocities.
These so-called signal velocities are calculated from
the phase velocities \cite[Lem. 4.4]{Cantos2014a}. We will use the same
ideas in this paper (Sec. \ref{subsec:sigVel}) to describe a stable system
in terms of traveling waves.

The paper \cite{Cantos2014b} studies transients in path systems. Its main result is the description of the
 transient in the path graph using two travelling waves, attenuated at the
 boundaries \cite[Thm. 3.5]{Cantos2014b}. 
 The connection of the circular system from \cite{Cantos2014a} to the path system (which is the one we are
really interested in) was conjectured in \cite{Cantos2014b} as follows:
the
asymptotic instability of the circular system should imply either flock or asymptotic instability
of the path system. We now restate flock stability for our system.
 \begin{defn}[Flock
stability, {\cite{Cantos2014b}}] The system (\ref{eq:overallSystemPath}) is
called flock stable if it is asymptotically stable and if \\$\max_{t \in
\mathbb{R}}|\pos_0(t) - \pos_N(t)|$ grows sub-exponentially in $N$ for the
conditions given in (\ref{eq:initCond}).
\end{defn}

 \subsection{Assumptions for solving}
The solution in \cite{Cantos2014b} is based on two main conjectures relating the
path and circular systems. Although we cannot prove them, we use them in the
present paper as well.
The final justification of both conjectures is in the
match of the predicted and simulated values, as shown in Sec. \ref{sec:simVer}.
The first one states that a local behavior of both systems is identical.
\begin{assum} 
If the path formation (\ref{eq:overallSystemPath}) is stable and flock stable,
then the behavior of a circular system is the same as in the path system
for vehicles reasonably far from the boundaries.
	\label{assum:localBeh} 
\end{assum}
This assumption allows to use properties derived for the circular graph (which
is much easier to analyze) in the path graph. We note that this treatment
implies that the boundary condition for the $N$th agent in the path graph, if
reasonable, does not enter the analysis, and therefore that boundary condition
does not affect our conclusions.

A similar assumption has been made by others (see \cite{Bamieh2002, Bamieh2012, 
D'Andrea2003}) to simplify the analysis and make the system spatially invariant.
In fact, in Solid State Physics this idea is known as \emph{periodic boundary conditions} and 
goes back to the beginning of the 20th century (see \cite{Ashcroft1976}).

In order to investigate flock stability in the path system using properties of a
circular system, we need an additional assumption about relations of the two
interconnections.
\begin{assum}
	If the circular formation is asymptotically
	unstable, then the path formation is either asymptotically unstable or flock
	unstable.
	\label{assum:flockUnstab}
\end{assum}
The explanation in \cite[Def. 3.2]{Cantos2014b} is that the path system has non-normal eigenspaces which makes the bad effects
more pronounced. To this explanation we can add an additional one based on the
travelling wave concept in distributed control \cite{Martinec2014b}.
Asymptotic instability can be caused by the travelling wave which is amplified as it travels in the formation.
The amplification will happen far from boundaries also in the path system.
This results either in an asymptotic instability or in a flock
instability (if the reflections at the boundaries attenuate the signal
sufficiently) of the path system. 

Both assumptions are illustrated for our third-order system in Fig.
\ref{fig:pathCircResp} which shows the initial responses ($\posdist_{35}(0)=2$
and $d_{i,j}=0\, \forall i,j$) of the circular and path system. As can be seen,
the signal gets amplified as it propagates from one agent to the other. On the other hand, individual agent's
response goes to zero, until the amplified travelling wave gets back to the
agent after propagating through all other agents (see the sharp growth at time $70$ in Fig.
\ref{fig:circResp}).
The initial behavior of the path graph and circular systems is the same --- both
amplify the signal.
%  which in the circular graph results in asymptotic
% instability (the signal travels around the circle infinitely many times), while the path
% graph is stabilized thanks to boundary conditions.
% Fig.~\ref{fig:travelling_wave} shows the signal propagation in the circular
% system.
\begin{figure}[t]
\centering
	\begin{subfigure}[b]{0.22\textwidth}
	\includegraphics[width=1\textwidth]{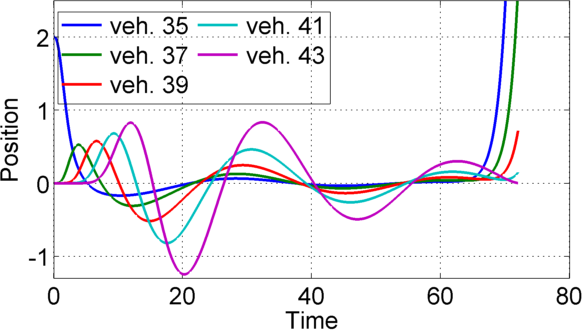}
	\caption{Circular system.}
	\label{fig:circResp}
	\end{subfigure} 
	\begin{subfigure}[b]{0.22\textwidth}
	\includegraphics[width=1\textwidth]{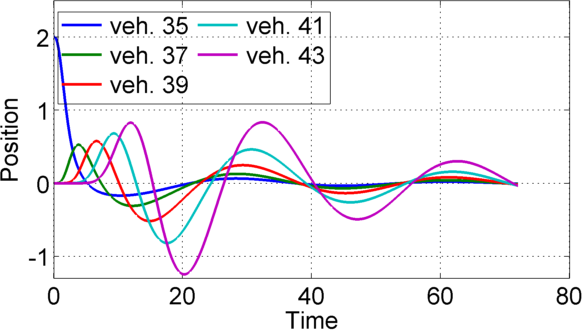}
	\caption{{Path system.}}
	\label{fig:pathResp}
	\end{subfigure}   
	\caption{Signal propagation in initial-condition response. a) response of an
	asymptotically unstable circular system, b) response of a flock unstable path
	system. In both cases $N=70, \fric = 3,\gx=2, \gv = 3, \rv=\rx=0.33$.}
	\label{fig:pathCircResp}
\end{figure}

\vspace{-10pt}
\section{Analysis of the circular system}
%\vspace{-5pt}
If $\rx \neq \rv$ in the path system in (\ref{eq:overallSystemPath}), there are
two different Laplacians $\lx$ and $\lv$ which are not simultaneously
diagonalizable. This prevents many convenient approaches to guarantee stability
such as a synchronization region \cite{Zhang2011} or LMI-based criterion
\cite{Massioni2009}. This makes the stability and performance
analysis of the path system very difficult.

To overcome this limitation, we invoke the Assumptions \ref{assum:localBeh} and
\ref{assum:flockUnstab} to extract some properties of the circular system and
apply them in the analysis of the path system.
To obtain such properties, we assume in this section that the
communication structure is the circular graph 
with Laplacians $\lxc, \lvc$. These Laplacians are circulant
matrices, which are \emph{simultaneously diagonalizable}.
Note that we investigate
the circular system only in order to learn something about the path system --
the circular system is \emph{not of interest} by itself. 

\subsection{Stability of the circular system}
When we assume the circular interaction topology, the
state-space model has a form
{\setlength{\arraycolsep}{2pt}
\begin{equation}
\dfrac{d}{dt}\! \begin{pmatrix} 
				\pos\\ 
				\dot \pos \\
 				\ddot \pos 
 			\end{pmatrix} \!=\!
\systMatCirc \!	\begin{pmatrix} 
				\pos \\ 
				\dot \pos \\ 
				\ddot \pos 
			\end{pmatrix} \!\equiv\!
			\begin{pmatrix} 
				0 & I & 0 \\
				0 & 0 & I \\
				-\gx \lxc & -\gv \lvc & -\fric I \end{pmatrix} \!
\begin{pmatrix} \pos\\ \dot \pos \\ \ddot \pos \end{pmatrix},
\label{eq:overallSystemCircle}
\end{equation}
}

The Laplacians $\lxc, \lvc$ are diagonalizable by the discrete Fourier
transform.
So let $\eigVectL_m$ be the $m$-th eigenvector of $\laplc$, that is the vector
whose $j$-th component satisfies
\begin{equation}
(\eigVectL_m)_j = e^{\imath \sfr j} \equiv e^{\imath\,\frac{2\pi m}{N+1}\,j},
\quad j=0,1,\ldots,N,
\end{equation}
with $\sfr=2\pi m/(N+1)$ and $\imath = \sqrt{-1}$. By \cite{Cantos2014b}, we
calculate the eigenvalues $\eiglx$ of $\gx \lx$ and $\eiglv$ of $\gv \lv$ as
\begin{equation}
\begin{array}{ccc}
\eiglx (\sfr) & = & \gx[1-\cos \sfr + \imath (1-2\rx) \sin \sfr],
\\
\eiglv(\sfr) & = &\gv[1-\cos \sfr + \imath (1-2\rv) \sin \sfr].
\end{array}
\end{equation}
% {\bf Remark:} these are really the Discrete Fourier transform of the vectors
% in $\mathbb{C}^N$ whose only non-zero components are the components
% at -1, 0 , and 1, and are given by $-(1+\asym)$, 1, and $\asym$, respectively.

Let us denote $\bx = 1-2\rx, \: \bv = 1-2\rv$. We can expand
the eigenvalues $\eiglx$ and $\eiglv$ in the Taylor series
\begin{IEEEeqnarray}{rCl}
 	\eiglx(\sfr) & = &
\gx \big[\imath \bx \sfr + \frac{1}{2} \sfr^2-\frac{\imath}{6}  \bx
\sfr^3 \ldots \big],
\label{eq:lxTaylorSer}
\\
\eiglv(\sfr) & = & \gv \big[\imath \bv \sfr+ \frac{1}{2} \sfr^2-\frac
{\imath}{6} \bv \sfr^3 \ldots \big].
\label{eq:lvTaylorSer}
\end{IEEEeqnarray}

We now calculate three eigenvalues $\eigm_{m,i}, \: i=1,2,3$ of $\systMatCirc$
associated with $\eigVectL_m$ for each $m$. The eigenvalue equation is
\begin{equation}
	\begin{pmatrix} 
		0 & I  & 0\\ 
		0 & 0 & I \\
		-\gx \lxc & -\gv \lvc &  -\fric I
	\end{pmatrix}
	\begin{pmatrix}
		 \eigVectM_{m,1}\\ 
		 \eigVectM_{m,2}\\
		 \eigVectM_{m,3}
	\end{pmatrix}
	= \eigm_{m,i}\,
	\begin{pmatrix} 
		\eigVectM_{m,1}\\
		\eigVectM_{m,2}\\
		\eigVectM_{m,3}
	\end{pmatrix},
\end{equation}
from which follows that $\eigVectM_{m,2}=\eigm_{m,i} \eigVectM_{m,1}, \quad
\eigVectM_{m,3} = \eigm_{m,i}\eigVectM_{m,2}$ and $-\gx \lxc \eigVectM_{m,1} 
-\gv \lvc \eigm_{m,i} \eigVectM_{m,1} -\fric I \eigm_{m,i}^2
\eigVectM_{m,1}=\eigm_{m,i}^3 \eigVectM_{m,1}$. Since $\lxc$ and $\lvc$ have
as an eigenvector $\eigVectL_m$, it follows from the last equality that
$\eigVectM_{m,1}=\eigVectL_m$. For simplicity of notation we drop the subscripts
of $\eigm$ except from when ambiguity seems possible.
From the last row it is elementary to get a set of
$N+1$ eigenvalue equations for $\systMatCirc$:
\begin{IEEEeqnarray}{rCl}
\eigm^3 + \fric \eigm^2 + \eiglv(\sfr)\eigm + \eiglx(\sfr)=0.
\label{eq:chareqNoPlug}
\end{IEEEeqnarray}
Substituting the expressions for $\eiglx(\sfr)$ and $\eiglv(\sfr)$, we
get
\begin{IEEEeqnarray}{rCl}
\eigm^3 +\fric \eigm^2+\gv[1-\cos \sfr + \imath (1-2\rv) \sin
\sfr]\eigm && \nonumber \\
+\gx[1-\cos \sfr + \imath (1-2\rx) \sin \sfr]&=&0.
\label{eq:characteristic3}
\end{IEEEeqnarray}
By letting
$\sfr$ equal $0$ or $\pi$ we get real polynomials
\begin{IEEEeqnarray}{rCl}
\eigm^3 + \fric\eigm^2 &=& 0, \label{eq:rootOpenLoop}\\
\eigm^3 + \fric \eigm^2 + 2 \gv\eigm + 2\gx &=& 0. \label{eq:realCoef2}
\end{IEEEeqnarray}
The equation (\ref{eq:realCoef2}) implies via Routh-Hurwitz criterion a simple
necessary conditions for stability.

\begin{lem} The necessary conditions for the stability of
(\ref{eq:overallSystemCircle}) for all $N$ are $\fric>0$, $\gx>0$ and $\gv> 0$
and $\fric> \gx/\gv$.
\label{lem:1.1-3rd-order}
\end{lem}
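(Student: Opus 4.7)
The plan is straightforward: apply the Routh--Hurwitz criterion directly to the real cubic (\ref{eq:realCoef2}), which the authors have just derived as the instance $\sfr=\pi$ of the characteristic equation (\ref{eq:characteristic3}). Since stability of (\ref{eq:overallSystemCircle}) requires that every one of the $N+1$ triples of eigenvalues associated with $\sfr=2\pi m/(N+1)$ lie in the open left half-plane, the roots of (\ref{eq:realCoef2}) in particular must lie there (for $N$ odd, $\sfr=\pi$ is achieved exactly; for general $N$ one invokes continuity in $\sfr$ together with the ``for all $N$'' quantifier, which lets $\sfr$ be taken arbitrarily close to $\pi$). Hence the Routh--Hurwitz conditions for (\ref{eq:realCoef2}) are necessary for stability.

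For a real cubic $\eigm^3 + a_2\eigm^2 + a_1\eigm + a_0$, Routh--Hurwitz gives the Hurwitz conditions $a_2>0$, $a_0>0$, $a_1>0$ and $a_2 a_1 > a_0$. Matching coefficients with (\ref{eq:realCoef2}) yields $a_2=\fric$, $a_1=2\gv$, $a_0=2\gx$, and the four inequalities become
\begin{equation*}
\fric>0, \qquad \gx>0, \qquad \gv>0, \qquad 2\fric\gv > 2\gx,
\end{equation*}
the last of which rearranges to $\fric > \gx/\gv$. These are exactly the four claimed conditions.

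I do not expect any real obstacle here; the work is purely mechanical once one recognizes that (\ref{eq:realCoef2}) is a \emph{necessary} reduction of the stability problem. The only pitfall worth flagging in the write-up is to emphasize that the lemma asserts \emph{necessary} (not sufficient) conditions: we have only constrained one slice of the spectrum (the $\sfr=\pi$ slice), and the remaining slices with complex coefficients in (\ref{eq:characteristic3}) may well impose further restrictions, which presumably are the subject of subsequent results in the paper. The slice $\sfr=0$ contributes no constraint beyond $\fric>0$, as (\ref{eq:rootOpenLoop}) only confirms the expected zero eigenvalue corresponding to the translational symmetry of the platoon.
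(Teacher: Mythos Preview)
Your proposal is correct and is exactly the argument the paper has in mind: the sentence immediately preceding the lemma already states that (\ref{eq:realCoef2}) ``implies via Routh--Hurwitz criterion a simple necessary conditions for stability,'' and the lemma is stated without further proof. Your write-up is in fact more careful than the paper's, since you explicitly address why the slice $\sfr=\pi$ is attainable (either exactly for $N$ odd, or in the limit via the ``for all $N$'' quantifier).
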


An important result is that a symmetric interaction in position is necessary
for the stability of the circular system.
\begin{lem} The necessary
condition for the stability of (\ref{eq:overallSystemCircle}) for all $N$ is
$\rx=1/2$, i.e., $\bx=0$.
\label{lem:x-symm-3rd-order}
\end{lem}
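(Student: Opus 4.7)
The plan is to prove the contrapositive: if $\bx \neq 0$ then (\ref{eq:overallSystemCircle}) is unstable for all sufficiently large $N$. Since Lemma \ref{lem:1.1-3rd-order} already requires $\fric, \gx, \gv > 0$, I work under those hypotheses and exploit the fact that $\sfr = 2\pi m/(N+1)$ can be made arbitrarily small by taking $m=1$ and $N \to \infty$. The goal reduces to exhibiting a root of the characteristic polynomial (\ref{eq:chareqNoPlug}) with strictly positive real part when $\sfr$ is small and positive.

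First I observe that at $\sfr = 0$ equation (\ref{eq:rootOpenLoop}) factors as $\eigm^2(\eigm+\fric)=0$, with roots $0, 0, -\fric$. By continuity of polynomial roots in the coefficients, for $|\sfr|$ sufficiently small one root of (\ref{eq:chareqNoPlug}) stays near $-\fric$ and thus safely in the open left half-plane, while two ``small'' roots bifurcate from the origin. These two are the ones to track.

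Next I perform a dominant balance on the small roots. Motivated by the expansions (\ref{eq:lxTaylorSer})--(\ref{eq:lvTaylorSer}), which give $\eiglx(\sfr), \eiglv(\sfr) = O(\sfr)$, I substitute the ansatz $\eigm = \sqrt{\sfr}\,\nu$ with $\nu = O(1)$. Dividing the cubic by $\sfr$ makes the $\eigm^3$ term scale as $\sqrt{\sfr}\,\nu^3$, the $\eiglv(\sfr)\eigm$ term as $\eiglv(\sfr)\nu/\sqrt{\sfr}=O(\sqrt{\sfr})$, and the $\eiglx$ term contributes $\eiglx(\sfr)/\sfr \to \imath \gx \bx$. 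The limiting leading-order equation is
\[ \fric\, \nu_0^2 + \imath \gx \bx = 0, \]
so $\nu_0 = \pm\sqrt{-\imath\gx\bx/\fric}$. For $\bx \neq 0$ the radicand is a nonzero purely imaginary number, and its two complex square roots lie at angle $\pm\pi/4$ from the real axis; in particular one of the two sign choices has $\mathrm{Re}(\nu_0) > 0$.

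The main technical point is to verify that the neglected corrections (the $\eigm^3$ term, the $O(\sfr^2)$ real part of $\eiglx$, and the product $\eiglv \eigm$) cannot overturn the sign of the leading real part. Since each of these contributes at most $O(\sqrt{\sfr})$ to $\nu$, standard perturbation of the cubic yields a genuine root $\nu = \nu_0 + O(\sqrt{\sfr})$; translating back, $\mathrm{Re}(\eigm) = \sqrt{\sfr}\,\mathrm{Re}(\nu_0) + O(\sfr)$, which is strictly positive for all sufficiently small $\sfr > 0$ because $\sqrt{\sfr}$ dominates $\sfr$. Choosing $m=1$ and $N$ large enough therefore produces an eigenvalue of $\systMatCirc$ in the open right half-plane, contradicting stability for all $N$. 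Hence $\bx = 0$, i.e., $\rx = 1/2$.
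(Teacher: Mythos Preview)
Your argument is correct and follows essentially the same route as the paper: both isolate the two roots of (\ref{eq:characteristic3}) bifurcating from the double zero at $\sfr=0$, perform a lowest-order balance to obtain $\fric\eigm^2+\imath\bx\gx\sfr\approx 0$, and conclude that a root lands in the open right half-plane for small $\sfr$. Your version is in fact a bit more careful---you make the scaling $\eigm=\sqrt{\sfr}\,\nu$ explicit and bound the neglected terms, and you correctly note that only \emph{one} of the two square-root branches has positive real part (the paper's claim of ``two solutions in the right half-plane'' is an overstatement, though one unstable root already suffices).
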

\begin{proof} From (\ref{eq:characteristic3}) for each value of $\sfr$
we obtain 3 roots $\nu(\sfr)$. Two of the roots in (\ref{eq:rootOpenLoop})
are zero.
We will find the behavior of nearby roots (for $|\sfr|$ small). Since roots of polynomials
are continuous functions of their coefficients, there will be
2 branches of roots with $\nu(\sfr)$ tending to zero as $\phi$ tends to zero
from above, and another 2 branches with $\nu(\sfr)$ tending to zero as $\sfr$ 
tends to zero from below.

Assume that $\beta_x\neq 0$. Expand (\ref{eq:characteristic3}) in terms of lowest
order (in $\nu^i\sfr^j$). We obtain
%\begin{IEEEeqnarray}{rCl}
$\fric \eigm^2 + i \beta_x g_x \sfr=0.$
%\end{IEEEeqnarray}
It is easily seen that for all values $|\sfr|$ this has two solutions in the 
right half-plane. It follows that if $|\sfr|$ is small enough,
(\ref{eq:characteristic3}) must have two solutions in the right half-plane. It
follows that we must choose $\beta_x= 0$, so $\rx=1/2$.
\end{proof}

The next theorem guarantees the stability of an arbitrary
large system with a circular topology.

 \begin{theo} All non-trivial eigenvalues of (\ref{eq:overallSystemCircle})
have negative real part if and only if all of the below hold:
\bsenn
\begin{array}{l}
\mathrm{I.}:\quad \fric > 0 \logand \gx > 0 \logand \gv > 0 \logand a>\gx / \gv,
\\
\mathrm{II.}: \quad \rx = 1/2,\\
\mathrm{III.}: \quad 1-2\rv \in \left(-\dfrac{\fric \gv - \gx}{\sqrt{2\gv^3}},
\dfrac{\fric \gv - \gx}{\sqrt{2\gv^3}}\right).
\end{array}
\esenn
\label{thm:stability-3rd-order}
\end{theo}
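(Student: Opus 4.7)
The plan is to leverage Lemmas \ref{lem:1.1-3rd-order} and \ref{lem:x-symm-3rd-order}, which already establish the necessity of conditions I and II, so that we may set $\beta_x=0$ and reduce the characteristic equation (\ref{eq:characteristic3}) to
$$\nu^3 + a\nu^2 + g_v\bigl[(1-\cos\sfr) + \imath\beta_v\sin\sfr\bigr]\nu + g_x(1-\cos\sfr)=0.$$
The remaining task is to determine for which $\beta_v$ every non-trivial root of this family (parametrized by $\sfr\in[0,2\pi)$) lies in the open left half-plane. My approach is the standard imaginary-axis-crossing argument: locate all values of $\beta_v$ at which the polynomial admits a purely imaginary root, and transfer stability from a reference case by continuity of roots in the coefficients.

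First, substitute $\nu=\imath\omega$ with $\omega\in\mathbb{R}$ and split into real and imaginary parts. The imaginary part factors as $\omega\bigl(\omega^2-g_v(1-\cos\sfr)\bigr)=0$, so either $\omega=0$, which forces $\sfr=0$ and yields only the trivial zero eigenvalues inherent to the circulant Laplacian, or $\omega^2=g_v(1-\cos\sfr)$. Substituting the latter into the real part and using $\sin^2\sfr=(1-\cos\sfr)(1+\cos\sfr)$ collapses the constraint to
$$(a g_v-g_x)^2 = g_v^3\beta_v^2(1+\cos\sfr).$$
Because the right-hand side attains its supremum $2g_v^3\beta_v^2$ as $\sfr\to 0$, a non-trivial imaginary-axis crossing exists for some admissible $\sfr$ precisely when $\beta_v^2 \geq (a g_v-g_x)^2/(2g_v^3)$, which is the complement of condition III.

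For sufficiency I would deform $\beta_v$ continuously from $0$ to its target value while keeping $a$, $g_x$, $g_v$ fixed so that I still holds. At $\beta_v=0$ the polynomial has real positive coefficients and the Routh--Hurwitz condition reduces to $a g_v(1-\cos\sfr)>g_x(1-\cos\sfr)$, which holds for every $\sfr\neq 0$ by condition~I; hence the circular system is stable at $\beta_v=0$. Since roots depend continuously on $\beta_v$ and no root can cross the imaginary axis while the strict version of III holds, stability persists throughout the deformation. For necessity, when III fails the computation above exhibits an explicit $\sfr^{*}\in(0,2\pi)$ at which a purely imaginary root appears, and a transversality check of $\mathrm{Re}(d\nu/d\beta_v)$ at the crossing confirms that a root actually enters the right half-plane.

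The main obstacle is ruling out instability coming from outside the range captured by this imaginary-axis calculation, in particular making sure that no root slips in from infinity or coalesces with the persistent trivial roots at $\sfr=0$ as $\beta_v$ is varied. At $\sfr=0$ the polynomial reduces to $\nu^2(\nu+a)=0$ independently of $\beta_v$, so the trivial part of the spectrum is frozen, and for $\sfr$ bounded away from $0$ the coefficients remain bounded so the non-trivial branches do not escape to infinity; this should close the continuity argument. The rest is routine algebra to put condition III in its stated symmetric-interval form.
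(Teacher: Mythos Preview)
Your argument is correct and the core method---substituting $\nu=\imath\omega$ into (\ref{eq:characteristic3}) with $\beta_x=0$, separating real and imaginary parts, and reading off the threshold on $\beta_v$---coincides with the paper's. The algebra differs only cosmetically: you square to reach $(a g_v-g_x)^2=g_v^3\beta_v^2(1+\cos\sfr)$, whereas the paper solves for $\beta_v$ directly and notes that the factor $\sqrt{2(1-\cos\sfr)}/\sin\sfr$ ranges over $(-\infty,-1]\cup[1,\infty)$; both observations encode the same inequality.

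The one substantive variation is in the continuity step for sufficiency. The paper deforms in $\sfr$: it tracks the three root branches from $\sfr=0$ (roots $0,0,-a$) and argues that none can cross the imaginary axis for $\sfr\in(0,2\pi)$ when III holds. You instead deform in $\beta_v$ from the real-coefficient case $\beta_v=0$, where Routh--Hurwitz settles stability for each fixed $\sfr\neq 0$ under condition~I. Your route has the mild advantage of sidestepping a point the paper leaves implicit: at $\sfr=0$ two branches already sit \emph{on} the imaginary axis, so one must still argue they move left rather than right as $\sfr$ increases (this is easily repaired by evaluating at $\sfr=\pi$ via (\ref{eq:realCoef2}), but the paper does not spell it out). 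Your transversality remark for necessity is also a bit more than the paper offers. Overall the two proofs are the same argument carried by different homotopy parameters.
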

\vspace{-20pt}
 \begin{proof} 
 Let us call the statement ``Circular system (\ref{eq:overallSystemCircle}) is
 stable'' as S.
 The necessity of conditions I. and II. were proved above. We will use them to
 prove that given I. and II., then III. is false is equivalent to S is false.
%The statement that the non-trivial eigenvalues have negative real
% part will be denoted by $S$. The above conditions will be simply referred to as
% 1, 2, and 3. We will prove that\\
% \emph{a) $S$ and 1 and 2 $\Longrightarrow$ 3}\\
% \emph{b) 1 and 2 and 3 $\Longrightarrow$ $S$}\\
% Since Lemmas \ref{lem:1.1-3rd-order} and \ref{lem:x-symm-3rd-order} provide us with\\
% \emph{c) $S$ $\Longrightarrow$ 1 and 2},\\
% the Theorem will be proved.
% We will prove the following: Given 1 and 2, then `3 is false' is equivalent to
% `$S$ is false'. This implies the contra-positive both of statement a and of statement b.

We know from (\ref{eq:rootOpenLoop}) that $-a$ is a solution of
(\ref{eq:characteristic3}) and that it lies in the left half-plane. By
continuity of roots of polynomials all the solutions of
(\ref{eq:characteristic3}) must lie on a curve $\eigm(\sfr)$ starting at $-a$.
To have unstable roots on the curve $\eigm(\sfr)$, the curve must cross the imaginary axis for some $\sfr \in (0, 2\pi)$.
Then there must be purely imaginary solutions  $\imath \omega$ ($\omega$ real)
to (\ref{eq:characteristic3}).
Substitute $\imath \omega$ for $\eigm$ into (\ref{eq:characteristic3}) to get
\begin{equation}
-\imath \omega^3-\fric \omega^2+ [\imath(1\!-\!\cos\sfr)\!-\! \bv\sin\sfr]\gv
\omega + \gx(1 \!-\! \cos\sfr)=0. \label{eq:imagAxis}
\end{equation}
The real and imaginary parts of (\ref{eq:imagAxis}) are, respectively:
\begin{IEEEeqnarray}{rCl}
 -\fric \omega^2-\gv \bv \omega \sin\sfr  +\gx(1-\cos \sfr) &=& 0,
 \label{eq:imagAxisReal}\\
 -\omega(\omega^2-\gv (1-\cos \sfr))&=&0. \label{eq:imagAxisImag}
\end{IEEEeqnarray}
So $S$ is false if both of the last equations hold (i.e.,
(\ref{eq:imagAxis}) has a solution).
The equation (\ref{eq:imagAxisImag}) holds for $\omega=0$ or
$\omega^2=\gv(1-\cos\phi)$, where $\omega=0$ gives only the `trivial'
eigenvalue (namely $\phi=0$). Plugging the other solution
$\omega^2=\gv(1-\cos\phi)$ into (\ref{eq:imagAxisReal}) gives:
\begin{equation}
\bv=\pm
\dfrac{\fric \gv-\gx}{\sqrt{2 g_v^3}}\,\dfrac{\sqrt{2(1-\cos\sfr)}}{\sin\sfr}.
\end{equation}
The factor $\frac{\sqrt{2(1-\cos\phi)}}{\sin\phi}$ maps the unit
circle onto $[-\infty,-1]\cup[1,\infty]$. So for $|\bv|\geq
\frac{\fric \gv - \gx}{\sqrt{2|\gv|^3}}$ there exists $\sfr$ for which equation
(\ref{eq:imagAxis}) is satisfied, the system then has purely imaginary roots and
therefore the system can be unstable. If $|\bv|<
\frac{\fric \gv - \gx}{\sqrt{2|\gv|^3}}$, then no imaginary solution exists and
whole curve $\eigm(\sfr)$ lies in the stable half-plane.
\end{proof}

\subsection{Signal properties}
\label{subsec:sigVel}
 \begin{figure}[t]
\centering
	\includegraphics[width=0.35\textwidth]{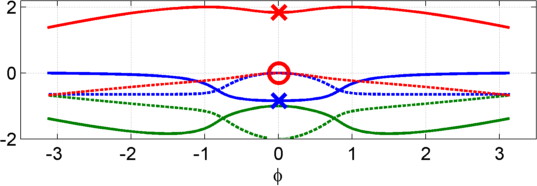}
	\caption{Phase velocities calculated by (\ref{eq:wvgen}) in $vehicles/s$ (solid)
	and the corresponding damping (dashed) as a function of $\sfr$. There are three
	waves for each $\sfr$, two with negative velocity (blue and green) and one with positive (red).
	The phase
	velocities in green have all high damping, so they do
	not affect the signal velocity.
	 The signal velocities from
	(\ref{eq:velc}) are shown by red ($\cp$) and blue ($\cm$) crosses, the
	corresponding damping by a red circle.
	  $N=500, \fric=2, \gx=6.2, \gv=10, \rv=0.4$.}
	\label{fig:sigVel}   
\end{figure}

Similarly to \cite{Cantos2014a}, we would like to obtain the signal velocity in
our circular system (\ref{eq:overallSystemCircle}). By \cite[Lem.
4.4]{Cantos2014a}, the phase velocity $\wv_{m, i}$ and its damping
$\alpha_{m,i}$ for a mode associated with a given $\sfr=2\pi m/(N+1)$ can be
calculated as
\begin{equation}
\wv_{m,i}=-\text{Im}(\eigm_{m,i})/\sfr, \:\:\: \alpha_{m,
i}=\text{Re}(\eigm_{m, i}), \,i=1,2,3. \label{eq:wvgen}
\end{equation}
We are interested in modes with
very low damping, since they travel in the system with slow decay --- they give
us the signal velocity. Thus, we want to find the eigenvalues $\eigm(\sfr)$ with
small real parts. They are those corresponding to $\sfr \rightarrow 0$. To find them, first expand $\eigm(\sfr)$
as 
\begin{IEEEeqnarray}{rCl}
	\eigm(\sfr) &=& \imath \eigmcoef_1\sfr+\frac{1}{2} \eigmcoef_2 \sfr^2
		+\frac{\imath}{6} \eigmcoef_3\sfr^3 \ldots ,
		\label{eq:eigmTaylorExp} 
\end{IEEEeqnarray}
We substitute the expansions (\ref{eq:lxTaylorSer}), (\ref{eq:lvTaylorSer}) and
(\ref{eq:eigmTaylorExp}) into (\ref{eq:chareqNoPlug}). Notice that the expansion
(\ref{eq:eigmTaylorExp}) works because the terms depending on $\sfr$ cancel in
(\ref{eq:characteristic3}).
We collect terms of order $\sfr^2$, and $\sfr^3$, etc. The coefficients of these
orders must be zero and that will determine $\eigmcoef_i$. The first
non-trivial equation is the coefficient of $\sfr^2$. It reads:
%\begin{equation}
%\begin{array}{cc}
${\cal O}(2): \, a \eigmcoef_1^2+\gv \bv \eigmcoef_1-\frac{1}{2} \gx=0.$
%\end{array}
%\end{equation}
We calculate $\eigmcoef_1$ as
%\begin{equation}  
%\begin{array}{ccc}
$\eigmcoef_1 = \dfrac{-\gv \bv \pm \sqrt{2\fric \gx +\gv^2\bv^2}}{2\fric}.$
%\end{array}
%\label{eq:imagPartLowFreq}
%\end{equation}

Since for $\sfr$ small by (\ref{eq:eigmTaylorExp}) and
(\ref{eq:wvgen}) $\text{Im}(\eigm(\sfr))\approx\eigmcoef_1 \sfr$, the
coefficient $-\eigmcoef_1$ determines the signal velocities.
\begin{lem}
	The signal velocities are given as 
	\begin{equation}
\wv_\pm=\dfrac{\gv \bv \pm \sqrt{\gv^2 \bv^2+2\fric \gx}}{2\fric},
\label{eq:velc}
\end{equation}
where $\cp>0$ and $\cm<0$ (velocity in vehicles/second).
\label{lem:signalVelocity} 
\end{lem}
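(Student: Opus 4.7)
The plan is to read off the signal velocities directly from the Taylor-series balance already set up in the paragraph preceding the lemma. By Lemma~\ref{lem:x-symm-3rd-order}, stability forces $\bx=0$, so the expansion (\ref{eq:lxTaylorSer}) of $\eiglx(\sfr)$ begins at order $\sfr^2$. I would substitute the ansatz (\ref{eq:eigmTaylorExp}), $\eigm(\sfr)=\imath \eigmcoef_1\sfr+\tfrac{1}{2}\eigmcoef_2\sfr^2+\cdots$, into the characteristic equation (\ref{eq:chareqNoPlug}) and track only the coefficient of $\sfr^2$. The $\eigm^3$ term contributes at $\sfr^3$, while $\fric\eigm^2$, $\eiglv\eigm$, and $\eiglx$ all contribute at $\sfr^2$, yielding the quadratic
\[
\fric\,\eigmcoef_1^2 + \gv\bv\,\eigmcoef_1 - \tfrac{1}{2}\gx = 0,
\]
whose two solutions are $\eigmcoef_1^{(\pm)}=\bigl(-\gv\bv\pm\sqrt{\gv^2\bv^2+2\fric\gx}\bigr)/(2\fric)$.

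Then I would combine this with the phase-velocity formula (\ref{eq:wvgen}): since $\text{Im}\,\eigm(\sfr)\approx \eigmcoef_1\sfr$ for small $\sfr$, the phase velocity converges in the $\sfr\to 0$ limit to $-\eigmcoef_1^{(\pm)}$. This gives the two long-wavelength branches the velocities
\[
\wv_\pm=\frac{\gv\bv\pm\sqrt{\gv^2\bv^2+2\fric\gx}}{2\fric},
\]
which are the signal velocities, since (as in \cite{Cantos2014a,Cantos2014b}) the weakly-damped modes near $\sfr=0$ are precisely the ones carrying persistent traveling waves. The third eigenvalue branch of $\systMatCirc$ at $\sfr=0$ sits near $-\fric$ by (\ref{eq:rootOpenLoop}) and thus has large damping, so it contributes no signal and may be discarded.

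For the sign claim I would simply invoke the necessary stability conditions from Lemma~\ref{lem:1.1-3rd-order}, namely $\fric>0$ and $\gx>0$. These imply $\gv^2\bv^2+2\fric\gx>\gv^2\bv^2$, hence $\sqrt{\gv^2\bv^2+2\fric\gx}>|\gv\bv|$. Consequently the numerator of $\cp$ is strictly positive and that of $\cm$ strictly negative, giving $\cp>0>\cm$ as stated.

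The only slightly delicate step is justifying that the ordinary power-series ansatz (\ref{eq:eigmTaylorExp}) is legitimate for the two branches through the origin (rather than a Puiseux expansion in fractional powers). This is the content of the author's remark that ``the terms depending on $\sfr$ cancel'' in (\ref{eq:characteristic3}): once $\bx=0$ is imposed, $\eigm=0$ is a double root of the characteristic polynomial at $\sfr=0$, and the coefficient of $\eigm^2$ (namely $\fric$) is nonzero, so the implicit function theorem applied to the quadratic factor guarantees two analytic branches $\eigm(\sfr)$ through the origin. Everything else is a routine coefficient calculation.
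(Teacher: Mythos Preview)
Your argument is correct and follows the paper's own derivation essentially step for step: the paper also substitutes the ansatz (\ref{eq:eigmTaylorExp}) into (\ref{eq:chareqNoPlug}), reads off the $\mathcal{O}(\sfr^2)$ quadratic $\fric\eigmcoef_1^2+\gv\bv\eigmcoef_1-\tfrac12\gx=0$, solves it, and identifies $\wv_\pm=-\eigmcoef_1$ via (\ref{eq:wvgen}), with the sign claim attributed to the stability conditions. Your write-up is in fact more careful than the paper's, since you make explicit the sign argument from $\fric,\gx>0$, discard the third branch near $-\fric$, and justify the analytic (non-Puiseux) form of the expansion.
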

By the stability conditions we mentioned, (\ref{eq:velc}) gives one
positive real and one negative real solutions (red and blue crosses in Fig.
\ref{fig:sigVel}). The wave with the positive velocity $\cp$ propagates in the
direction with growing vehicle index and the wave with $\cm$ the other way.

\section{Transients in the path system}
We have obtained enough properties of the circular system to derive the
transients the original \emph{path system} (\ref{eq:overallSystemPath}). The
transient we analyze is when the platoon is in steady-state and the leader starts to move with unit velocity (\ref{eq:initCond}).

We have Theorem
\ref{thm:stability-3rd-order} guaranteeing the stability of the circular system
which by Assumption \ref{assum:flockUnstab} allows for flock stability of the
path system. The signal velocity in (\ref{eq:velc}) should remain the
same in the path system --- Assumption \ref{assum:localBeh}. The boundary conditions are
the same as in \cite{Cantos2014b} --- the leader driven independently of the platoon and 
the agent $N$ having no follower. 
For stable systems (in both senses), the orbit of the last agent
can be characterized by the following quantities (see Fig. \ref{fig:responseOpt}):
half-period $\period$ is the smallest $t>0$ such that $\pos_\numVeh(t)-\pos_0(t)=0$ and the amplitude
$\ampl_i$ of the $i$th oscillation is $\ampl_i=\max_{t \in
[(i-1)\period, i\period]} |\pos_0-\pos_N|$. We can now restate
\cite[Thm. 3.5]{Cantos2014b} for the system with friction. 

\begin{theo} If Assumptions \ref{assum:localBeh} and \ref{assum:flockUnstab}
hold and the path system (\ref{eq:overallSystemPath}) is asymptotically stable,
the parameter values satisfy the conditions in Theorem \ref{thm:stability-3rd-order} and as $\numVeh$ tends to infinity, the system (\ref{eq:overallSystemPath}) will behave as a wave equation with
boundary conditions. The signal velocities are given by 
(\ref{eq:velc}).
 In particular, if from
an equilibrium position at rest, the leader starts to move with a unit velocity
at $t=0$, then the characteristics of the orbit of $\pos_0(t) - \pos_\numVeh(t)$ are:
\begin{IEEEeqnarray}{rCl}
\ampl_1&=&\dfrac{\numVeh}{|\cp|},  \label{eq:time}\\
 |\ampl_{k+1}/\ampl_{k}|&=& |\cm|/|\cp|,
\label{eq:amplDecay} \\
\period &=& \numVeh \left|
\frac{1}{|\cp|}+\frac{1}{|\cm|}\right|.
\label{eq:period}
\end{IEEEeqnarray}
\label{thm:velocities-3rd-order}
\end{theo}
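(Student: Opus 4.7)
The plan is to follow the blueprint of \cite[Thm.~3.5]{Cantos2014b} and replace its signal velocities by those of Lemma \ref{lem:signalVelocity}, which already incorporate the third-order dynamics and friction. By Assumption \ref{assum:localBeh} the local (interior) behavior of the path system matches that of the circular system (\ref{eq:overallSystemCircle}); by Assumption \ref{assum:flockUnstab} together with Theorem \ref{thm:stability-3rd-order} stability is in hand, so only boundary effects and the propagation of the transient need to be tracked.

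First I would argue that, in the limit $N\to\infty$, the interior dynamics reduce to a transport operator with two characteristic speeds $c_+$ and $c_-$. Substituting a long-wave ansatz and keeping only the leading terms in the expansions (\ref{eq:lxTaylorSer})--(\ref{eq:lvTaylorSer}), with $\rx=1/2$ forced by Theorem \ref{thm:stability-3rd-order}, one recovers the factored operator $(\partial_t-c_+\partial_x)(\partial_t-c_-\partial_x)$, whose general solution is a superposition $F(x-c_+t)+G(x-c_-t)$. The third-order and friction contributions enter at order $\sfr^2$ and higher in (\ref{eq:eigmTaylorExp}) and hence produce only sub-leading damping on the timescales $N/|c_\pm|$ that are relevant to a single half-period.

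Next I would impose the two boundary conditions: $z_0(t)$ is prescribed by the leader (equal to $0$ for $t<0$ and to $t$ for $t\geq 0$), while the last agent carries the regular (free-end) condition used in \cite{Cantos2014b}. The leader's step emits a right-moving wave that propagates at speed $|c_+|$; during $t\in[0,N/|c_+|]$ this wave has not yet reached agent $N$, so $z_N(t)\equiv 0$ while $z_0(t)=t$, yielding $A_1=N/|c_+|$ and proving (\ref{eq:time}). When the wave reaches agent $N$ it reflects as a left-moving wave of speed $|c_-|$; at the leader it reflects once more via the Dirichlet-type condition on $z_0$. Computing the ratio of outgoing to incoming amplitudes at the free end gives the reflection factor $|c_-|/|c_+|$, so a single round trip multiplies the amplitude of $z_0-z_N$ by this factor and consumes time $N/|c_+|+N/|c_-|$, producing (\ref{eq:amplDecay}) and (\ref{eq:period}) after induction on $k$.

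The hardest step will be justifying the reflection coefficient $|c_-|/|c_+|$ in the present setting. In the symmetric double-integrator case of \cite{Cantos2014b} this comes from matching characteristics at the free end, but with friction and the asymmetric velocity coupling one must check that the boundary recurrence satisfied by agent $N$, once linearized around the long-wave ansatz, still relates $F$ and $G$ in exactly that ratio at leading order in $1/N$, with the $\fric\,\ddot z$ and cubic-in-$\sfr$ terms contributing only to the subdominant damping envelope. Once this is in place the amplitude decay and period formulas follow by iterating the round trip, and the wave-equation picture delivers the macroscopic description of the transient asserted by the theorem.
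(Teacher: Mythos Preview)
Your proposal is correct and follows exactly the paper's approach: the paper's entire argument for this theorem is the one-line remark that the proof of \cite[Thm.~3.5]{Cantos2014b} uses only boundary conditions and wave velocities and hence carries over once Lemma~\ref{lem:signalVelocity} supplies the new $c_\pm$. Your write-up is in fact more detailed than what the paper offers, since you spell out the long-wave reduction, the arrival-time computation for $A_1$, and the reflection bookkeeping that the paper simply leaves implicit in the citation.
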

 \begin{figure}[t]
\centering
	\includegraphics[width=0.42\textwidth]{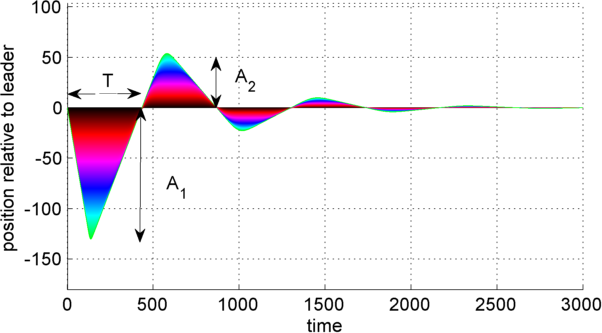}
	\caption{Spacing error to the leader $\pos_0-\pos_i$ with optimized
	controller (see Sec. \ref{sec:optimization} and
	\ref{sec:optimizationResults}). $N = 250, \fric=2, \gx = 6.2, \gv
	= 10$ and $\rv=0.4$.}
	\label{fig:responseOpt} 
\end{figure}
The proof of \cite[Thm. 3.5]{Cantos2014b} uses only boundary conditions and
wave velocities, hence it remains valid for our case as well.
Note that when the leader starts moving, this causes first a wave with velocity
$\cp$, which then reflects at agent $\numVeh$ as a wave with velocity $\cm$. 
Notice that we want $|\cm|/|\cp|$ to be less than 1 to avoid
exponential growth of the amplitudes. Since $\gv$ and $\fric$ must be positive, we want to keep
$\bv>0$, i.e., $\rv < 1/2$ and the agent pays more attention to the front
velocity error.
 
\vspace{-10pt}

\section{Simulation verification}
\label{sec:simVer}
Fig.~\ref{fig:relative_erros_comp} numerically validates
Theorem~\ref{thm:velocities-3rd-order} by calculating the relative error between
the predicted and measured values as a function of $\numVeh$. Let $\chi$
be a given quantity of interest in Thm. \ref{thm:velocities-3rd-order} --- either $\ampl_i$,
$\ampl_{i+1}/\ampl_{i}$ or $T$.
Let $\chi_{\text{pred}}$ be the value predicted by (\ref{eq:time}),
(\ref{eq:amplDecay}) or (\ref{eq:period}), respectively, and
$\chi_{\text{meas}}$ be the value measured from the numerical simulations of a finite platoon. The error is calculated as
\begin{align}
  \vartheta = \log\left(\frac{\chi_{\text{pred}}}{\chi_{\text{meas}}}-1\right), \label{eq:relative_error_verif}
\end{align}
We can see
that the relative error of each predicted parameter decreases exponentially with
the increasing number of vehicles in the platoon. This confirms the asymptotic
formulas in Theorem \ref{thm:velocities-3rd-order} and also Assumptions
\ref{assum:localBeh} and \ref{assum:flockUnstab}.
\begin{figure}[ht]
 \centering
  \includegraphics[width=0.40\textwidth]{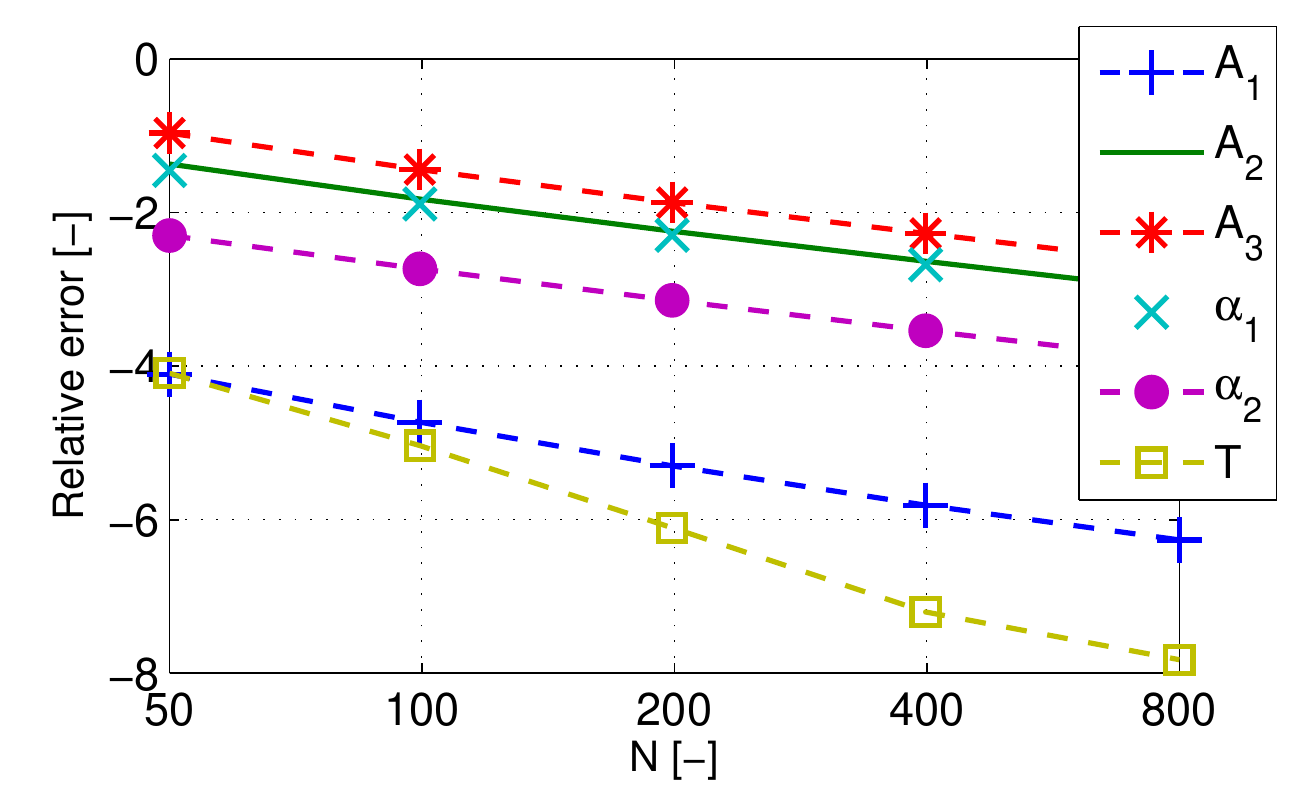}
  \caption{The numerical verification of (\ref{eq:time}), (\ref{eq:amplDecay})
  and (\ref{eq:period}). The relative errors are carried
  out with (\ref{eq:relative_error_verif}) for $g_x = 6.2$, $g_v = 10$, $a=2$, $\rho_x = 0.5$ and $\rho_v = 0.4$. The $\alpha_1$ and $\alpha_2$ are the attenuation coefficients from (\ref{eq:amplDecay}) for $A_2/A_1$ and $A_3/A_2$, respectively.}
  \label{fig:relative_erros_comp}
\end{figure}
 
The numerical simulations in Fig.~\ref{fig:strategies_comp} show that a platoon
with a controller tuned symmetrically (the left panel) has a very long
transient. The transient is shortened for the case of the asymmetric controller
(the middle panel), however, the overshoot of such a platoon is extremely large,
which is a consequence of Lemma~\ref{lem:x-symm-3rd-order} --- the circular
system is unstable, hence the path system is flock unstable.
When we set the asymmetry only in the velocity (right), then both the transient and
the overshoot are reasonable.
\begin{figure*}[ht]
 \centering
  \includegraphics[width=0.75\textwidth]{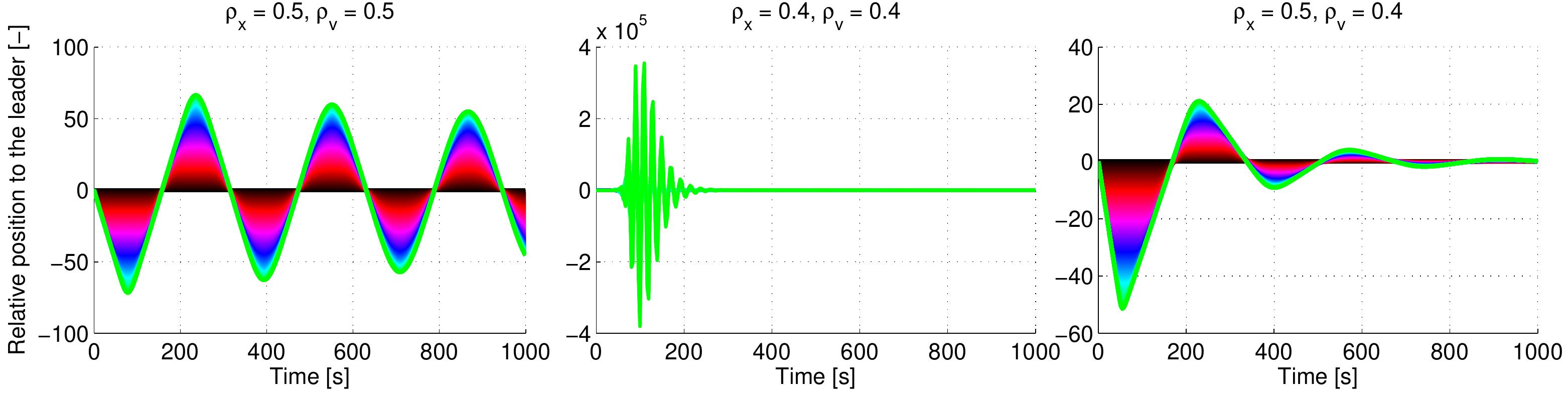}
  \caption{The numerical simulations comparing the responses of three different
  control strategies for $N=100$, when the leader changes its velocity from 0 to
  1. The figure shows the relative positions to the leader
  $\pos_0(t)-\pos_i(t)$ of all the vehicles for three different combinations of
  $\rho_x$ and $\rho_v$. For all three cases were $g_x = 6.2$, $g_v = 10$ and
  $a=2$.}
  \label{fig:strategies_comp}
\end{figure*}

% Fig.~\ref{fig:circle_path_comp} numerically verifies the Assumption~\ref{assum:localBeh}. We can see on the left panels that behaviour of the circular and path systems are almost identical at the beginning of the simulation. The behaviour begins to differ as the two waves meet in the circular system (top-right panel) at time $t \approx 25\,{\text{s}}$. One of the waves in the path system reaches the boundary at about the same time (bottom-right panel).
% 
% \begin{figure}[ht]
%  \centering
%   \includegraphics[width=0.49\textwidth]{figures/circle_path_comp.pdf}
%   \caption{The numerical simulations comparing the responses of the circular system (top panels) with the path system (bottom panels). Both simulations show response on the initial position of the $50$th vehicle. In both cases is $N=100$, $\rho_x = 0.5$ and $\rho_v=0.4$.}
%   \label{fig:circle_path_comp}
% \end{figure}

\section{Optimization of system parameters}
\label{sec:optimization}
The previous section gave us signal velocities and amplitudes of the transient,
which depend on the gains $\gx$ and $\gv$ and the velocity asymmetry $\rv$. In
this section we give an approach how to select these three parameters. We assume that
the friction $\fric$ is given by the vehicle model and cannot be affected by the
designer. 
%By such optimization we get both the controller parameters and the
%interaction graph (we obtain $\rv$).

We propose the following method for ``optimal'' (due to asymptotic formulas)
gain and asymmetry selection.
It is based on minimizing the absolute value of the spacing error of all
vehicles in the formation, denoted as $\areaosc$, when the leader starts to move
from the stand-still. Therefore, the optimization has a form
\begin{equation}
	\min_{\gx, \gv, \bv} \areaosc = \min \sum_{i=1}^{\numVeh} \int_{0}^{\infty}
	|\erl_i(t)| dt,
	\label{eq:totalAbsError}
\end{equation}
where the error is given by $\erl_i(t)=\pos_0(t)-\pos_i(t)$. Clearly, $\erl_0
= 0$. $\areaosc$ is minimized over $\gx, \gv, \bv$. 

Theorem \ref{thm:velocities-3rd-order} tells us that the system behaves as a
wave equation with boundaries. After a unit change of leader's velocity, first
the signal spreads from the leader to vehicle $\numVeh$ with velocity $\cp$ and then
it reflects back with velocity $\cm$. The graph of the response of the last vehicle in the formation must then be almost triangular, as shown in Fig.
\ref{fig:responseOpt}. The error of the first oscillation for the last
vehicle (before $\erl_\numVeh$ gets back to zero for the first time) is
\begin{equation}
	\areaosc_{N,1}=\int_0^{T} |\erl_\numVeh(t)| dt \approx \frac{1}{2} \period
	\ampl_1.
\end{equation}

To get the error of the other
agents we assume that the maximal value of the error of $i$th agent is given
by $\frac{\ampl}{\numVeh}i$ (the peaks are uniformly spaced from 0 to $A_1$ with
distance $\frac{\ampl_1}{\numVeh}$). Then the shape of the error is almost a
trapezoid with one base of length $\period$ and the other with 
%\begin{equation}
$(\numVeh-i)\left(\frac{1}{|\cp|}+\frac{1}{|\cm|}\right) =
\period-\frac{i}{\numVeh} \period = \period \left(1-\frac{i}{\numVeh} \right).$
%\end{equation}
The absolute value of the error of the $i$th vehicle in the first
oscillation is approximately the area of the trapezoid
\begin{IEEEeqnarray}{rCl}
	\areaosc_{i,1}&=&\int_0^{\period} |\erl_i(t)| \approx  
	%\frac{\ampl_1}{\numVeh}i	\frac{1}{2}
	%\left[\period + \period\left(1-\frac{i}{\numVeh}\right)\right] 
	%\nonumber \\
	%&&=
	 \frac{\ampl_1}{\numVeh}i
	\period \left(1-\frac{i}{2 \numVeh}\right).
\end{IEEEeqnarray}
We have approximated the first oscillations. The errors of the others
are calculated in a similar same way, i.e. the period is again $\period$ and the
amplitude is obtained using (\ref{eq:amplDecay}). The total absolute value of the error of
the $i$th agent is 
%\begin{equation}
	$\areaosc_i = \int_{0}^{\infty} |\erl_i(t)| dt,$
%\end{equation}
which is using the trapezoidal approximation given as the sum of areas of
all oscillations
\begin{IEEEeqnarray}{rCl}
	\areaosc_i &=& \int_0^\infty |\erl_i(t)| dt = \sum_{j=1}^{\infty}
	\areaosc_{i,j} \approx \sum_{j=1}^{\infty} \frac{\ampl_j}{\numVeh}i \period \left(1-\frac{i}{2 \numVeh}\right) \nonumber
	\\
	%&=& 
	%\period
	%\left(1-\frac{i}{2 \numVeh}\right) \sum_{j=1}^{\infty}
	%\frac{\ampl_1}{\numVeh}i \left(\frac{|\cm|}{|\cp|}\right)^{j-1}   \nonumber
	%\\
	&=& 
	\frac{\ampl_1}{\numVeh} i
	\period \left(1-\frac{i}{2 \numVeh}\right) \frac{1}{1-\frac{|\cm|}{|\cp|}}.
	\label{eq:trapApproxI}
\end{IEEEeqnarray}
We used (\ref{eq:amplDecay}) to quantify the amplitude of the $j$th oscillation
and then the sum of geometric series since $\frac{|\cm|}{|\cp|} < 1$.

Our criterion (\ref{eq:totalAbsError}) captures the sum of $\areaosc_i$ of
all agents. It can be calculated as
\begin{IEEEeqnarray}{rCl}
	\areaosc &=& \sum_{i=1}^\numVeh \areaosc_i \approx \sum_{i=1}^\numVeh
	\frac{\ampl_1}{\numVeh} i \period \left(1-\frac{i}{2 \numVeh}\right)
	\frac{1}{1-\frac{|\cm|}{|\cp|}} \nonumber \\
	 &=& \ampl_1
	\frac{1}{1-\frac{|\cm|}{|\cp|}} \period C = J C.
	\label{eq:optimCrit0}
\end{IEEEeqnarray}
with $C = \sum_{i=1}^\numVeh  \frac{i}{\numVeh} \left(1-\frac{i}{2
\numVeh}\right)$ being a constant which cannot be changed by optimization. Thus, it suffices to minimize $J$.
After plugging from (\ref{eq:period}) and (\ref{eq:amplDecay}), it has a form
\begin{IEEEeqnarray}{rCl}
	J &=& \frac{ \ampl_1 \period}{1-\frac{|\cm|}{|\cp|}}
	= 
	\left(\frac{|\cm|+|\cp|}{|\cp||\cm|}\right) \frac{\numVeh^2}{|\cp|-|\cm|} =
	\hat{J} \numVeh^2.
	\label{eq:optimCrit1}
\end{IEEEeqnarray}
 The number of agents is not part of the optimization and does not affect the
minimum. Plugging for the signal velocities from (\ref{eq:velc})  we evaluate
the sums and products as $
	|\cm|\!+\!|\cp|\! =\! (\sqrt{\gv^2 \bv^2+2 \fric \gx})/\fric, \:
	|\cm|\!-\!|\cp|\!=\!(\gv \bv)/\fric, \:
	|\cm||\cp| = \gx/(2 \fric)$.
With these terms the criterion (\ref{eq:optimCrit1}) becomes
\begin{equation}
	\hat{J} = \frac{\sqrt{\gv^2 \bv^2+2 \fric \gx}}{\gv \bv}\frac{2 \fric}{\gx} =
	\sqrt{\frac{1}{\gx^2}+\frac{2 \fric}{\gv^2 \bv^2 \gx }} 2 \fric.
	\label{eq:jhat}
\end{equation}
Since $\fric$ is a given constant and the square root is monotonic function, we
get the final optimization problem
\begin{IEEEeqnarray}{rCl}
	\min  \quad && \frac{1}{\gx^2}+\frac{2 \fric}{\gv^2 \bv^2 \gx }, 
	\label{eq:finalCrit}\\
	&&\text {s.t. conditions in Theorem \ref{thm:stability-3rd-order}.}
	\nonumber
\end{IEEEeqnarray}
The final
criterion is a function only of the parameters $\gx, \gv$ and $\bv$, which the
platoon designer can affect.

\subsection{Scaling of the absolute error}
The total error (\ref{eq:totalAbsError}) can be written using
(\ref{eq:optimCrit0}) and (\ref{eq:optimCrit1}) 
\begin{equation}
	\areaosc \approx J C = \hat{J} N^2 \sum_{i=1}^\numVeh  \frac{i}{\numVeh}
	\left(1-\frac{i}{2 \numVeh}\right). \label{eq:totalErrInN1}
\end{equation}
\begin{lem}
	The error $\areaosc$ in (\ref{eq:totalAbsError}) scales cubically with $N$ as
	\begin{equation}
		\areaosc(N) \approx \frac{\hat{J}}{12} N (N+1)(4N-1).
		\label{eq:totalErrEst}
	\end{equation}
\end{lem}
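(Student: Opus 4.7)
The plan is very short: starting from equation (\ref{eq:totalErrInN1}), the statement reduces to a closed-form evaluation of the sum $S(N) := \sum_{i=1}^{N} \frac{i}{N}\bigl(1-\frac{i}{2N}\bigr)$, since $\hat{J}$ does not depend on $i$ and can be factored out.

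First I would split $S(N)$ into the linear and quadratic parts,
\begin{equation*}
S(N) \;=\; \frac{1}{N}\sum_{i=1}^{N} i \;-\; \frac{1}{2N^{2}}\sum_{i=1}^{N} i^{2},
\end{equation*}
and then substitute the standard identities $\sum_{i=1}^{N} i = N(N+1)/2$ and $\sum_{i=1}^{N} i^{2} = N(N+1)(2N+1)/6$. After simplification this gives
\begin{equation*}
S(N) \;=\; \frac{N+1}{2} - \frac{(N+1)(2N+1)}{12 N} \;=\; \frac{(N+1)\bigl[6N - (2N+1)\bigr]}{12N} \;=\; \frac{(N+1)(4N-1)}{12 N}.
\end{equation*}

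Multiplying by $\hat{J} N^{2}$, as in (\ref{eq:totalErrInN1}), the factor $N$ in the denominator cancels one power of $N^{2}$, producing
\begin{equation*}
\areaosc(N) \;\approx\; \hat{J}\, N^{2}\cdot S(N) \;=\; \frac{\hat{J}}{12}\, N(N+1)(4N-1),
\end{equation*}
which is precisely (\ref{eq:totalErrEst}). There is no real obstacle here: the only nontrivial step is recognizing that the $1/N$ prefactor in the sum combines with $\sum i^{2}=\mathcal{O}(N^{3})$ to produce an overall $\mathcal{O}(N^{2})$ contribution, so that after multiplication by $N^{2}$ the leading order in $N$ is cubic. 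The ``$\approx$'' sign is inherited from (\ref{eq:totalErrInN1}), which in turn comes from the triangular/trapezoidal approximation of the orbits derived earlier; no additional approximation is introduced in this lemma.
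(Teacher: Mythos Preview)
Your proof is correct and follows essentially the same approach as the paper: both start from (\ref{eq:totalErrInN1}), split the sum into $\sum_{i} i$ and $\sum_{i} i^{2}$, substitute the standard closed forms, and simplify to $\frac{\hat{J}}{12}N(N+1)(4N-1)$. The only cosmetic difference is that you isolate and evaluate $S(N)$ before multiplying by $\hat{J}N^{2}$, whereas the paper absorbs the $N^{2}$ factor first; the algebra is otherwise identical.
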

\begin{proof} 
The proof is a simple manipulation of (\ref{eq:totalErrInN1}). 
\begin{IEEEeqnarray}{rCl}
	\areaosc &\approx& \hat{J} N^2 \sum_{i=1}^\numVeh  \frac{i}{\numVeh}
	\left(1-\frac{i}{2 \numVeh}\right) = \hat{J} N^2 \sum_{i=1}^\numVeh  \frac{2 N
	i - i^2}{2\numVeh^2} \nonumber \\
	&=& \frac{\hat{J}}{2}\left[2N
	\sum_{i=1}^\numVeh i - \sum_{i=1}^\numVeh i^2\right] =
 \frac{\hat{J}}{2} \left(2N\frac{N(N+1)}{2} \right.\\
 && \left. - \frac{N(N+1)(2N+1)}{6} \right) =
 \frac{\hat{J}}{12} N (N+1)(4N-1). \nonumber \qedhere
\end{IEEEeqnarray} 
\end{proof}
The scaling of $\areaosc$  of different architectures with $N$ calculated from
simulations is in Fig.
\ref{fig:scaling}. It is clear that the error is the smallest for asymmetry only
in velocity and also that the error of asymmetric control with identical
asymmetries scales exponentially in $N$, which confirms flock instability ($\rx
\neq 0.5$). Also the predicted value matches the calculated one.
\begin{figure}[t]
\centering  
	\includegraphics[width=0.32\textwidth]{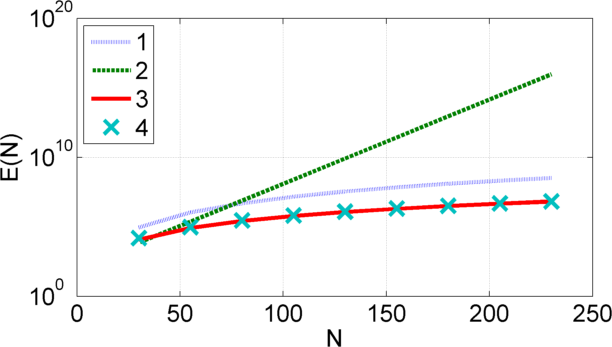}
	\caption{Logarithm of $\areaosc(\numVeh)$ for different architectures ---
	symmetric with $\rx=\rv=0.5$ (line 1), asymmetric with $\rx=\rv=0.4$ (line 2) and asymmetric
	with $\rx=0.5, \rv = 0.4$ (line 3). Line 4 shows estimate using
	(\ref{eq:totalErrEst}). Other parameters were $\gx=6.2, \gv=10, \fric=2$.}
	\label{fig:scaling} 
\end{figure}

\section{Optimization results}
\label{sec:optimizationResults}
Although both the criterion (\ref{eq:finalCrit}) and
the stability constraints are nonconvex, the optimization
using function \emph{fmincon} in Matlab for nonconvex optimization terminated quickly and successfully. 
The code used for simulations in the whole paper can be obtained at
\cite{Herman2015e}.

% We note that after the optimization the stability of the path graph must be
% numerically checked, as the stability conditions in Theorem
% \ref{thm:stability-3rd-order} are derived only for a circular system.

As follows from (\ref{eq:finalCrit}), the optimization procedure tried to
increase the gains $\gx$ and $\gv$  to decrease the
criterion. Therefore we specified upper bounds on $\gx, \gv$ to limit the
controller effort. The optimization was conducted for a given
friction $\fric=2$ and we got the values 
	$\gx = 6.2, \gv = 10$ and $\rv=0.4$.
The upper bounds for both gains $\gx, \gv$ were set to 10. To stay away from
the flock stability boundary, we changed the flock stability criterion in Thm.
\ref{thm:stability-3rd-order} to
\begin{equation}
	\bv = 1-2\rv \in \left(-\dfrac{\fric \gv - \gx}{\sqrt{2\gv^3}}+\varepsilon,
\dfrac{\fric \gv - \gx}{\sqrt{2\gv^3}}-\varepsilon \right),
\label{eq:modStabCrit}
\end{equation}
with $\varepsilon=0.1$.
The response is shown in Fig.
\ref{fig:responseOpt} and \ref{fig:strategies_comp}c.

\subsection{Robustness evaluation}
The optimization results should be verified to give robust results. The
simplest way to achieve robustness is to add some nonzero term $\varepsilon$ to
each of the stability criteria in Thm. \ref{thm:stability-3rd-order}, similarly
to (\ref{eq:modStabCrit}).
Then the system is not allowed to operate on the flock stability boundary.

The most important parameter of the system is the friction $\fric$. This might
change during the operation of the system and also might not be exactly known
apriori.
Using the values $\gx = 6.2, \gv = 10,\rv=0.4, N=1200$, we simulated the
response of the system for friction range $\fric \in [1.4, 2.8]$
and calculated the norm of the error using (\ref{eq:totalAbsError}).
Fig. \ref{fig:sensitivityToFric} shows how the norm changes with 
friction. We see that the change is approximately linear in friction and the
system has a good performance for a wide range of $\fric$.
The sharp growth for low friction caused by a flock instability confirms the
Assumption \ref{assum:flockUnstab}. The stability criterion of a circular system
in Thm. \ref{thm:stability-3rd-order} is violated for $a\leq 1.514$ since $\bv =
1-2\rv > \left( \frac{a \gv-\gx}{\sqrt{2|\gv|^3}}\right)$, making the system flock
unstable. As $N \rightarrow \infty$, the sharp growth appears exactly at the
critical point $a=1.514$.

 \begin{figure}[t]
\centering 
	\begin{subfigure}[b]{0.23\textwidth}
	\includegraphics[width=1\textwidth]{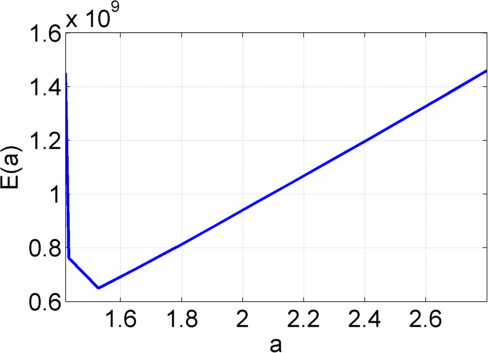}
	\caption{Sens. to change in $\fric$. }
	\label{fig:sensitivityToFric}  
	\end{subfigure}  
	\begin{subfigure}[b]{0.23\textwidth}
	\includegraphics[width=1\textwidth]{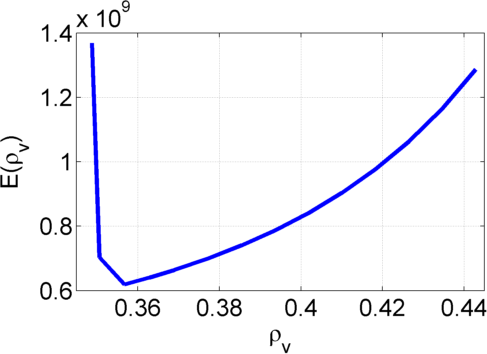}
	\caption{Sens. to change in $\rv$.}
	\label{fig:sensToRho}
	\end{subfigure} 
	\caption{Sensitivities to different parameters. The plots show value of
	$\areaosc$ of (\ref{eq:totalAbsError}) calculated from simulations.}
\end{figure}   
  
The Figure \ref{fig:sensToRho} shows how the error norm changes as a function of
$\rv \in [0.34, 0.44]$ with the optimal value $\rv^*=0.37$ (other parameters are
$\gx = 8.3, \gv = 10$, $\fric=2, N = 1200$). These are the values with
$\varepsilon=0$ in (\ref{eq:modStabCrit}). It is clear that the value $\rv^*$ is
almost the minimum of the function $\areaosc(\rv)$.
The better performance for lower (non-optimal) $\rv \leq \rv^*$ is due to
asymptotic formulas used. When number of vehicles increases, the minimum  will get closer to $\rv^*$.
Due to the sharp growth for $\rv < 0.36$ we recommend using $\rv=0.40$ to
achieve robustness, as obtained using (\ref{eq:modStabCrit}) with $\varepsilon=0.1$.
Then we get also $\gx=6.2$, as above.
\vspace{-10pt}
\section{Conclusion}
We investigated the transients in a vehicular platoon with identical vehicles
having friction. The asymmetries for a coupling in position and velocity are
different.
For some steps in the analysis, we considered circular system in order
to infer properties of the path system.
  
To achieve flock
stability, the coupling in the position must be symmetric. Then the path system
behaves as a wave equation with boundaries, where the travelling waves have
different velocities.
Using the velocities, we developed an optimization procedure to determine
the parameters of the controller. We conclude that the
behavior of a platoon with different asymmetries is superior to both symmetric
and identically asymmetric platoon.
 
We believe that the approach and some of the results shown here easily
generalize to more complicated systems having two integrators in the open loop.

%\bibliographystyle{ieeetran} 
% Include this if you use bibtex 
\bibliography{CollaborativeControl-AsymmetryThirdOrder}

\vspace{\fill}
\end{document}